\providecommand{\U}[1]{\protect\rule{.1in}{.1in}}
\newtheorem{theorem}{Theorem}
\newtheorem{acknowledgement}[theorem]{Acknowledgement}
\newtheorem{corollary}[theorem]{Corollary}
\newtheorem{definition}[theorem]{Definition}
\newtheorem{proposition}[theorem]{Proposition}
\newtheorem{remark}[theorem]{Remark}
\newenvironment{proof}[1][Proof]{\noindent\textbf{#1.} }{\ \rule{0.5em}{0.5em}}
\begin{document}

\title{On the Non-Uniqueness of Statistical Ensembles Defining a Density Operator and
a Class of Mixed Quantum States with Integrable Wigner Distribution}
\author{Charlyne de Gosson\thanks{charlyne.degosson@gmail.com}\\University of Vienna
\and Maurice de Gosson\thanks{maurice.de.gosson@univie.ac.at}\\University of Vienna}
\maketitle

\begin{abstract}
It is standard to assume that the Wigner distribution of a mixed quantum state
consisting of square-integrable functions is a quasi-probability distribution,
that is that its integral is one and that the marginal properties are
satisfied. However this is in general not true. We introduce a class of
quantum states for which this property is satisfied, these states are dubbed
\textquotedblleft Feichtinger states\textquotedblright\ because they are
defined in terms of a class of functional spaces (modulation spaces)
introduced in the 1980's by H. Feichtinger. The properties of these states are
studied, which gives us the opportunity to prove an extension to the general
case of a result of Jaynes on the non-uniqueness of the statistical ensemble
generating a density operator.

\end{abstract}

\section{Introduction}

A useful device when dealing with density operators is the covariance matrix,
whose existence is taken for granted in most elementary and advanced texts.
However, a closer look shows that the latter only exists under rather
stringent conditions on the involved quantum states. This question is related
to another more general one. Assume that we are dealing with mixed quantum
state $\{(\psi_{j},\alpha_{j})\}$ with $\psi_{j}\in L^{2}(\mathbb{R}^{n})$,
$||\psi_{j}||=1$ and $\alpha_{j}\geq0$, $\sum_{j}\alpha_{j}=1$ the index $j$
belonging to some countable set. The Wigner distribution of that state, which
is by definition the convex sum of Wigner transforms
\begin{equation}
\rho=\sum_{j}\alpha_{j}W\psi_{j}\label{1}%
\end{equation}
is usually referred to as a \textquotedblleft
quasi-probability\textquotedblright. But such an interpretation makes sense
only if
\begin{equation}
\int_{\mathbb{R}^{2n}}\rho(z)dz=1\label{2}%
\end{equation}
which is the case if $\psi_{j}\in L^{2}(\mathbb{R}^{n})$, but it does not
imply \textit{per se} the absolute integrability of $W\psi_{j}$: there are
examples of square-integrable functions $\psi$ such that $W\psi\in
L^{2}(\mathbb{R}^{n})$ but $W\psi\notin L^{1}(\mathbb{R}^{2n})$ \cite{Gro}. In
\cite{CoTa} Cordero and Tabacco give a simple example in dimension $n=1$: the
function $\psi=\chi_{\lbrack-/2,1/2]}$ (the characteristic function of the
interval $[-/2,1/2]$ belongs to both $L^{1}(\mathbb{R})$ and $L^{2}%
(\mathbb{R})$ but we have $W\psi\notin L^{1}(\mathbb{R}^{2})$. On the other
hand, even when satisfied, condition (\ref{2}) is not sufficient to ensure the
existence of the covariances
\begin{equation}
\sigma_{x_{j}p_{k}}^{2}=\int_{\mathbb{R}^{2n}}x_{j}p_{k}\rho(z)dz\label{3}%
\end{equation}
since these involve the calculation of second moments, and the integrability
of $\rho$ does not guarantee the convergence of such integrals. These
difficulties are usually ignored in the physical literature, or dismissed by
vague assumptions like \textquotedblleft sufficiently fast\textquotedblright%
\ decrease of the Wigner distribution at infinity. 

The aim of this paper is to remedy this vagueness by proposing an adequate
functional framework for a rigorous analysis of mixed states and of the
associated density operator. For this purpose we will use the Feichtinger
modulation spaces $M_{s}^{1}(\mathbb{R}^{n})$, in instead of $L^{2}%
(\mathbb{R}^{n})$ as a \textquotedblleft reservoir\textquotedblright\ for
quantum states. In the simplest case, $s=0$, we get the so-called Feichtinger
algebra $M_{0}^{1}(\mathbb{R}^{n})=S_{0}(\mathbb{R}^{n})$ which can be defined
by
\begin{equation}
\psi\in S_{0}(\mathbb{R}^{n})\Longleftrightarrow\left\{
\begin{array}
[c]{c}%
\psi\in L^{2}(\mathbb{R}^{n})\\
W\psi\in L^{1}(\mathbb{R}^{2n})
\end{array}
\right.  ~. \label{4}%
\end{equation}
We agree that it is not clear at all why the condition above should define a
vector space -- let alone an algebra! -- since the Wigner transform is not
additive; surprisingly enough this is however the case, and one sees that if
we assume that the state $\{(\psi_{j},\alpha_{j})\}$ is such that $\psi_{j}\in
S_{0}(\mathbb{R}^{n})$ for every $j$ then the normalization condition
(\ref{2}) is satisfied. We will also see that the existence of covariances
like (\ref{3}) (and hence the covariance matrix) is guaranteed if we make the
sharper assumption that $\psi_{j}\in M_{s}^{1}(\mathbb{R}^{n})$ for some
$s\geq2$. While the theory of modulation spaces has become a standard tool in
time-frequency and harmonic analysis, it is somewhat less known in quantum
physics. In \cite{golu} we have applied it to deformation quantization; in
\cite{Dias} spectral and regularity results for operators in modulation spaces
are studied. One of the reasons why this theory is less popular in quantum
mechanics might be that the usual treatments are given in terms of short-time
Fourier transforms (also called Gabor transforms) instead of the Wigner
transform as we do here, and this has the tendency to make the theory obscure
for many physicists ignoring the simple relation between Wigner and Gabor
transforms. We therefore speculate that this lack of communication between
both communities is of a pedagogical nature. The use of Wigner transforms in
the theory of modulation spaces actually has many advantages, for instance it
makes the symplectic invariance of these spaces become obvious and thus links
them directly to the Weyl--Wigner--Moyal formalism.

The notation used here is standard, the phase space variable is $z=(x,p)$ with
$x=(x_{1},...,x_{n})\in\mathbb{R}^{n}$ and $p=(p_{1},...,p_{n})\in
(\mathbb{R}^{n})^{\ast}\equiv\mathbb{R}^{n}$. The scalar product on
$L^{2}(\mathbb{R}^{n})$ is
\[
(\psi|\phi)=\int_{\mathbb{R}^{n}}\psi(x)\overline{\phi(x)}dx
\]
($\overline{\phi(x)}$ the complex conjugate of $\phi(x)$) and the associated
norm is denoted by $||\psi||$.

\section{The Modulation Spaces $M_{s}^{1}(\mathbb{R}^{n})$}

We give here a brief review of the main definitions and properties of the
class of modulations spaces we will need. Modulation spaces were introduced by
Feichtinger in the early 1980's \cite{Hans1,Hans2,fe06}. The most complete
treatment can be found in the book \cite{Gro} by Gr\"{o}chenig; also see the
recent review paper \cite{Jakobsen}. In \cite{Birkbis} (Chapters 16 and 17)
modulation spaces are studied from the point of view of the Wigner transform
which we use here.

We will denote by $W(\psi,\phi)$ the cross-Wigner function of
\[
(\psi,\phi)\in L^{2}(\mathbb{R}^{n})\times L^{2}(\mathbb{R}^{n})~;
\]
it is defined by
\begin{equation}
W(\psi,\phi)(z)=\left(  \tfrac{1}{2\pi\hbar}\right)  ^{n}\int_{\mathbb{R}^{n}%
}e^{-\frac{i}{\hbar}p\cdot y}\psi(x+\tfrac{1}{2}y)\overline{\phi(x-\tfrac
{1}{2}y)}dy~. \label{8}%
\end{equation}
When $\psi=\phi$ one obtains the usual Wigner transform
\begin{equation}
W\psi(z)=\left(  \tfrac{1}{2\pi\hbar}\right)  ^{n}\int_{\mathbb{R}^{n}%
}e^{-\frac{i}{\hbar}p\cdot y}\psi(x+\tfrac{1}{2}y)\overline{\psi(x-\tfrac
{1}{2}y)}dy~.
\end{equation}
Recall \cite{Birkbis,WIGNER} that $W(\psi,\phi)$ is a continuous function
belonging to $L^{2}(\mathbb{R}^{2n})$ and that
\begin{equation}
|W(\psi,\phi)(z)|\leq\left(  \tfrac{2}{\pi\hbar}\right)  ^{n}||\psi||~||\phi||
\label{bound}%
\end{equation}
as well as
\begin{equation}
\int W(\psi,\phi)(z)dz=(\psi|\phi)~. \label{Wnorm}%
\end{equation}
Taking $\psi=\phi$ \ it follows that, in particular,%
\begin{equation}
\int W\psi(z)dz=||\psi||~
\end{equation}
hence the integral condition (\ref{2}) holds as soon as $\psi\in
L^{2}(\mathbb{R}^{n})$ is normalized to one.

In what follows $s$ is a non-negative real number: $s\geq0$. We set $z=(x,p)$
and
\begin{equation}
\langle z\rangle=(1+|z|^{2})^{1/2}~.\label{5}%
\end{equation}
The function $z\longmapsto\langle z\rangle$ is the Weyl symbol of the elliptic
pseudodifferential operator $(1-\Delta)^{1/2}$ where $\Delta$ is the Laplacian
in the $z$ variables. We denote by $L_{s}^{1}(\mathbb{R}^{2n})$ the weighted
$L^{1}$-space is defined by%
\begin{equation}
L_{s}^{1}(\mathbb{R}^{2n})=\{\rho:\mathbb{R}^{2n}\longrightarrow
\mathbb{C}:\langle z\rangle^{s}\rho\in L^{1}(\mathbb{R}^{2n})\}~,\label{6}%
\end{equation}
that is, $\rho\in L_{s}^{1}(\mathbb{R}^{2n})$ if and only if we have
\begin{equation}
\Vert\rho\Vert_{L_{s}^{1}}=\int|\rho(z)|\langle z\rangle^{s}dz<\infty
~.\label{6bis}%
\end{equation}
Due to the submultiplicativity of the weights these space are in fact Banach
algebras with respect to convolution (see \cite{rest00}). The same is true for
the spaces $M_{s}^{1}(\mathbb{R}^{n})$.

\begin{definition}
The modulation space $M_{s}^{1}(\mathbb{R}^{n})$ consists of all $\psi\in
L^{2}(\mathbb{R}^{n})$ such that
\begin{equation}
W(\psi,\phi)\in L_{s}^{1}(\mathbb{R}^{2n}) \label{7}%
\end{equation}
for every $\phi\in\mathcal{S}(\mathbb{R}^{n})$ (the Schwartz space of test
functions deceasing rapidly at infinity).
\end{definition}

It turns out that it suffices to check that condition (\ref{7}) holds for
\textit{one} function $\phi\neq0$ for it then holds for \textit{all}; moreover
the mappings $\psi\longmapsto||\psi||_{\phi,M_{s}^{1}}$ defined by
\[
||\psi||_{\phi,M_{s}^{1}}=||W(\psi,\phi)||_{L_{s}^{1}(\mathbb{R}^{2n})}%
=\int_{\mathbb{R}^{2n}}|W(\psi,\phi)(z)|\left\langle z\right\rangle ^{s}dz
\]
form a family of equivalent norms, and the topology on $M_{s}^{1}%
(\mathbb{R}^{n})$ thus defined makes it into a Banach space. We have the chain
of inclusions
\begin{equation}
\mathcal{S}(\mathbb{R}^{n})\subset M_{s}^{1}(\mathbb{R}^{n})\subset
L^{1}(\mathbb{R}^{n})\cap\mathcal{F}(L^{1}(\mathbb{R}^{n}))\subset
L^{1}(\mathbb{R}^{n})\cap C^{0}(\mathbb{R}^{n}) \label{incl}%
\end{equation}
where $\mathcal{S}(\mathbb{R}^{n})$ is the Schwartz space of tests functions
and $\mathcal{F}(L^{1}(\mathbb{R}^{n}))$ is the space of Fourier transforms
$\mathcal{F}\psi$ of the elements $\psi$ of $L^{1}(\mathbb{R}^{n})$. Observe
that
\[
M_{s}^{1}(\mathbb{R}^{n})\subset M_{s^{\prime}}^{1}(\mathbb{R}^{n})\text{
\textit{if and only if} }s\geq s^{\prime}~;
\]
and one proves \cite{Gro} that
\[%
{\textstyle\bigcap_{s\geq0}}
M_{s}^{1}(\mathbb{R}^{n})=\mathcal{S}(\mathbb{R}^{n})~.
\]

Recall that the metaplectic group $\operatorname*{Mp}(n)$ is the unitary
representation on $L^{2}(\mathbb{R}^{n})$ of the symplectic group.
$\operatorname*{Sp}(n)$. The modulation spaces $M_{s}^{1}(\mathbb{R}^{n})$ are
invariant under the action of metaplectic group: if $\widehat{S}%
\in\operatorname*{Mp}(n)$ and $\psi\in M_{s}^{1}(\mathbb{R}^{n})$ then
$\widehat{S}\psi\in M_{s}^{1}(\mathbb{R}^{n})$. This property actually follows
from the symplectic covariance property
\begin{equation}
\label{covarfei}W(\widehat{S}\psi)(z)=W\psi(S^{-1}z)
\end{equation}
of the Wigner transform, where $S\in\operatorname*{Sp}(n)$ is the projection
of $\widehat{S}\in\operatorname*{Mp}(n)$ (see \cite{Birkbis} for a detailed
study of the metaplectic representation and symplectic covariance).

When $s=0$ we write $M_{0}^{1}(\mathbb{R}^{n})=S_{0}(\mathbb{R}^{n})$ (the
Feichtinger algebra); clearly $M_{s}^{1}(\mathbb{R}^{n})\subset S_{0}%
(\mathbb{R}^{n})$ for all $s\geq0$. In addition to being a vector space,
$S_{0}(\mathbb{R}^{n})$ is a Banach algebra\ for both pointwise multiplication
and convolution. It is actually the smallest Banach algebra invariant under
the action of the metaplectic group $\operatorname*{Mp}(n)$ and phase space
translations. As mentioned in the introduction, the Feichtinger algebra can be
characterized by the condition (\ref{4}): $S_{0}(\mathbb{R}^{n})$ is the
vector space of all $\psi\in L^{2}(\mathbb{R}^{n})$ such that $W\psi\in
L^{1}(\mathbb{R}^{2n})$.

\section{Feichtinger States}

Consider, as in the introduction, a mixed quantum state $\{(\psi_{j}%
,\alpha_{j})\}$ and denote by
\[
\widehat{\rho}=\sum_{j}\alpha_{j}\widehat{\rho}_{j}~\ ,~\rho=\sum_{j}%
\alpha_{j}W\psi_{j}%
\]
the corresponding density operator and its Wigner distribution; $\widehat{\rho
}_{j}$ is the orthogonal projection on the ray $\mathbb{C}\psi_{j}$, that is
\[
\widehat{\rho}_{j}\psi=(\psi|\psi_{j})\psi_{j},\quad\psi\in L^{2}%
(\mathbb{R}^{n}).
\]
Observe that $(2\pi\hbar)^{n}\rho$ is the Weyl symbol of the operator
$\widehat{\rho}$ \cite{Birkbis}.

\begin{definition}
A mixed state $\{(\psi_{j},\alpha_{j})\}$ is a Feichtinger state if and only
we have $\psi_{j}\in M_{s}^{1}(\mathbb{R}^{n})$ for some $s\geq0$.
\end{definition}

Feichtinger states are preserved by the action of the metaplectic group:

\begin{proposition}
Let $\{(\psi_{j},\alpha_{j})\}$ be a Feichtinger state and $\widehat{S}%
\in\operatorname*{Mp}(n)$. Then $\{(\widehat{S}\psi_{j},\alpha_{j})\}$ is also
a a Feichtinger state.
\end{proposition}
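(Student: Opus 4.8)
The plan is to reduce the statement to the metaplectic invariance of the modulation spaces $M_{s}^{1}(\mathbb{R}^{n})$ recorded in Section 2. First, since $\{(\psi_{j},\alpha_{j})\}$ is by hypothesis a Feichtinger state, there is a \emph{single} $s\geq0$ with $\psi_{j}\in M_{s}^{1}(\mathbb{R}^{n})$ for every index $j$. Because $\widehat{S}\in\operatorname*{Mp}(n)$ is unitary on $L^{2}(\mathbb{R}^{n})$, each $\widehat{S}\psi_{j}$ is again an $L^{2}$-function with $||\widehat{S}\psi_{j}||=||\psi_{j}||=1$, while the weights $\alpha_{j}$ are unchanged, so $\{(\widehat{S}\psi_{j},\alpha_{j})\}$ is a bona fide mixed quantum state. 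It then remains only to check that $\widehat{S}\psi_{j}\in M_{s}^{1}(\mathbb{R}^{n})$ for every $j$, with the same exponent $s$; this is precisely the invariance of $M_{s}^{1}(\mathbb{R}^{n})$ under $\operatorname*{Mp}(n)$.

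For completeness I would spell out why this invariance holds, using the Wigner picture rather than merely quoting it. Fix a nonzero $\phi\in\mathcal{S}(\mathbb{R}^{n})$; since metaplectic operators preserve the Schwartz space, $\widehat{S}\phi\in\mathcal{S}(\mathbb{R}^{n})$ as well. The symplectic covariance (\ref{covarfei}) extends to cross-Wigner functions, $W(\widehat{S}\psi,\widehat{S}\phi)(z)=W(\psi,\phi)(S^{-1}z)$, where $S\in\operatorname*{Sp}(n)$ is the projection of $\widehat{S}$. Hence, using the $\widehat{S}\phi$-norm on $M_{s}^{1}$ and the change of variables $z=Sw$ (which has Jacobian $1$ because $\det S=1$ for symplectic matrices),
\[
||\widehat{S}\psi||_{\widehat{S}\phi,M_{s}^{1}}=\int_{\mathbb{R}^{2n}}|W(\psi,\phi)(S^{-1}z)|\left\langle z\right\rangle ^{s}dz=\int_{\mathbb{R}^{2n}}|W(\psi,\phi)(w)|\left\langle Sw\right\rangle ^{s}dw~.
\]
The one computation that needs a word of care is the comparison of weights: from $|Sw|\leq||S||\,|w|$ one gets $\left\langle Sw\right\rangle ^{2}\leq\max(1,||S||^{2})\left\langle w\right\rangle ^{2}$, hence $\left\langle Sw\right\rangle ^{s}\leq\max(1,||S||^{s})\left\langle w\right\rangle ^{s}$. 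Therefore $||\widehat{S}\psi||_{\widehat{S}\phi,M_{s}^{1}}\leq\max(1,||S||^{s})\,||\psi||_{\phi,M_{s}^{1}}<\infty$, and since it suffices to verify the defining condition (\ref{7}) for a single test function, $\widehat{S}\psi\in M_{s}^{1}(\mathbb{R}^{n})$.

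Applying this with $\psi=\psi_{j}$ for each $j$ (all with the same $s$) yields $\widehat{S}\psi_{j}\in M_{s}^{1}(\mathbb{R}^{n})$ for every $j$, which together with the first paragraph shows that $\{(\widehat{S}\psi_{j},\alpha_{j})\}$ is a Feichtinger state. I do not expect any serious obstacle here: the only points deserving attention are that the exponent $s$ can be chosen uniformly in $j$ (it can, being the one furnished by the hypothesis) and the elementary weight estimate $\left\langle Sw\right\rangle \lesssim_{S}\left\langle w\right\rangle $ that makes the change of variables harmless.
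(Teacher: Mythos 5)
Your proof is correct and follows essentially the paper's route: the paper disposes of the statement by invoking the metaplectic invariance of $M_{s}^{1}(\mathbb{R}^{n})$, which it itself attributes to the symplectic covariance property (\ref{covarfei}). You simply make that invariance explicit (covariance of the cross-Wigner transform, unimodular change of variables $z=Sw$, and the weight estimate $\left\langle Sw\right\rangle^{s}\leq\max(1,||S||^{s})\left\langle w\right\rangle^{s}$), all of which is accurate.
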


\begin{proof}
It is an immediate consequence of metaplectic invariance of the modulation
spaces $M_{s}^{1}(\mathbb{R}^{n})$.
\end{proof}

The Wigner distribution $\rho$ of a Feichtinger state is a \textit{bona-fide} quasi-distribution:

\begin{proposition}
\label{Prop1}Let $\{(\psi_{j},\alpha_{j})\}$ be a Feichtinger state. Then
$\rho\in L^{1}(\mathbb{R}^{2n})$ and the marginal properties%
\begin{align}
\int_{\mathbb{R}^{n}}\rho(z)dp  &  =\sum_{j}\alpha_{j}|\psi_{j}(x)|^{2}%
\label{marg1}\\
\int_{\mathbb{R}^{n}}\rho(z)dx  &  =\sum_{j}\alpha_{j}|F\psi_{j}(x)|^{2}
\label{marg2}%
\end{align}
hold, and we have
\begin{equation}
\int_{\mathbb{R}^{2n}}\rho(z)dz=1~. \label{norm}%
\end{equation}

\end{proposition}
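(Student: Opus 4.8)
The plan is to reduce everything to the Feichtinger algebra and then to three ingredients: the defining property $W\psi_j\in L^1(\mathbb{R}^{2n})$ of $S_0$, the elementary marginal identities for a single Wigner transform, and an interchange of summation and integration. First I would observe that since $M_s^1(\mathbb{R}^n)\subset M_0^1(\mathbb{R}^n)=S_0(\mathbb{R}^n)$ for every $s\geq 0$, a Feichtinger state has $\psi_j\in S_0(\mathbb{R}^n)$ for all $j$; by the inclusions (\ref{incl}) each $\psi_j$ lies in $L^1(\mathbb{R}^n)\cap C^0(\mathbb{R}^n)$, and by the characterisation (\ref{4}) one has $W\psi_j\in L^1(\mathbb{R}^{2n})$.

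Second, I would establish the single-state marginals. Fix $j$. For almost every $x$ the function $y\mapsto \psi_j(x+\tfrac12 y)\overline{\psi_j(x-\tfrac12 y)}$ is integrable (Cauchy--Schwarz, using $\psi_j\in L^2$), so by (\ref{8}) the section $p\mapsto W\psi_j(x,p)$ is, up to the factor $(2\pi\hbar)^{-n}$, the $\hbar$-Fourier transform in $y$ of that function; since moreover $W\psi_j\in L^1(\mathbb{R}^{2n})$, Fubini gives that this section lies in $L^1(\mathbb{R}^n_p)$ for a.e.\ $x$, and Fourier inversion together with the continuity of $\psi_j$ yields $\int_{\mathbb{R}^n}W\psi_j(x,p)\,dp=|\psi_j(x)|^2$. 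The change of variables $u=x+\tfrac12 y$, $v=x-\tfrac12 y$ gives in the same way $\int_{\mathbb{R}^n}W\psi_j(x,p)\,dx=|F\psi_j(p)|^2$. Both right-hand sides lie in $L^1(\mathbb{R}^n)$ with integral $\|\psi_j\|^2=1$; in particular $\int W\psi_j(z)\,dz=1$, which is also (\ref{Wnorm}) with $\phi=\psi_j$.

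Third, I would pass to the mixture. Writing $\rho=\sum_j\alpha_j W\psi_j$, one has $\int|\rho(z)|\,dz\leq\sum_j\alpha_j\|W\psi_j\|_{L^1}$, so to get $\rho\in L^1(\mathbb{R}^{2n})$ it suffices that $\sum_j\alpha_j\|W\psi_j\|_{L^1}<\infty$; I would obtain this from the estimate $\|W\psi_j\|_{L^1(\mathbb{R}^{2n})}\leq C\,\|\psi_j\|_{S_0}^2$, a consequence of the boundedness of the Wigner map $S_0(\mathbb{R}^n)\times S_0(\mathbb{R}^n)\to S_0(\mathbb{R}^{2n})\hookrightarrow L^1(\mathbb{R}^{2n})$ (the algebra property of $S_0$). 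Once $\rho\in L^1$, Fubini--Tonelli legitimates interchanging $\sum_j$ with the partial integrals in $p$, in $x$ and in $z$, and feeding in the single-state identities of the previous step produces (\ref{marg1}), (\ref{marg2}) and (\ref{norm}).

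The step I expect to be the genuine obstacle is exactly the $L^1$-convergence of the series defining $\rho$: for a normalised $\psi\in S_0$ the quantity $\|W\psi\|_{L^1}$ is not controlled by $\|\psi\|=1$ alone, so $\sum_j\alpha_j\|W\psi_j\|_{L^1}$ need not be finite merely because each $\psi_j\in S_0$ and $\sum_j\alpha_j=1$. The estimate $\|W\psi_j\|_{L^1}\leq C\|\psi_j\|_{S_0}^2$ reduces the matter to $\sum_j\alpha_j\|\psi_j\|_{S_0}^2<\infty$, which I would either read as tacitly part of the notion of a Feichtinger state (it is implicit in asserting $\rho\in L^1$) or enforce by restricting to mixtures with $\sup_j\|\psi_j\|_{S_0}<\infty$, in particular finite mixtures. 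Everything else --- the marginal computations and the interchange of sum and integral --- is routine once this uniformity is secured.
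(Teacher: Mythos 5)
Your proof follows essentially the same route as the paper's: reduce to the Feichtinger algebra via $M_{s}^{1}(\mathbb{R}^{n})\subset S_{0}(\mathbb{R}^{n})$, use the inclusion (\ref{incl}) to get $\psi_{j},F\psi_{j}\in L^{1}(\mathbb{R}^{n})$ and hence the single-state marginals (\ref{margbis1})--(\ref{margbis2}) (which the paper simply quotes from \cite{WIGNER}, while you re-derive them by Fubini and Fourier inversion; both are legitimate since $S_{0}\subset L^{1}\cap\mathcal{F}(L^{1})\cap C^{0}$), and then sum over $j$. The one place where you go beyond the paper is the passage from ``$W\psi_{j}\in L^{1}(\mathbb{R}^{2n})$ for each $j$'' to ``$\rho\in L^{1}(\mathbb{R}^{2n})$'': the paper asserts this with a bare ``hence'', whereas you correctly note that for an infinite ensemble it requires $\sum_{j}\alpha_{j}\Vert W\psi_{j}\Vert_{L^{1}}<\infty$, which does not follow from $\psi_{j}\in S_{0}$, $\Vert\psi_{j}\Vert=1$ and $\sum_{j}\alpha_{j}=1$ alone, since $\Vert W\psi_{j}\Vert_{L^{1}}$ (equivalently the $S_{0}$ norm squared) can grow arbitrarily fast in $j$; taking for $\psi_{j}$ superpositions of many coherent states located in far-apart regions of phase space makes the masses $|W\psi_{j}|$ essentially disjointly supported, so no cancellation rescues integrability. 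Thus your caveat is not a defect of your argument but a genuine gap in the paper's own proof for infinite index sets; under your extra hypothesis (finite $J$, or $\sup_{j}\Vert\psi_{j}\Vert_{S_{0}}<\infty$, or more generally $\sum_{j}\alpha_{j}\Vert\psi_{j}\Vert_{S_{0}}^{2}<\infty$) your argument is complete, and Tonelli then also justifies the term-by-term integration leading to (\ref{marg1}), (\ref{marg2}) and (\ref{norm}), a step the paper likewise leaves implicit.
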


\begin{proof}
Since $M_{s}^{1}(\mathbb{R}^{n})\subset M_{0}^{1}(\mathbb{R}^{n}%
)=S_{0}(\mathbb{R}^{n})$ we automatically have $W\psi_{j}\in L^{1}%
(\mathbb{R}^{2n})$ for every $j$ hence $\rho\in L^{1}(\mathbb{R}^{2n})$. On
the other hand we know that the marginal conditions
\begin{align}
\int_{\mathbb{R}^{n}}W\psi_{j}(z)dp  &  =|\psi_{j}(x)|^{2}\label{margbis1}\\
\int_{\mathbb{R}^{n}}W\psi_{j}(z)dx  &  =|F\psi_{j}(p)|^{2} \label{margbis2}%
\end{align}
hold if both $\psi_{j}$ and $F\psi_{j}$ are integrable \cite{WIGNER}, and this
is precisely the case here in view of the inclusion (\ref{incl}). It follows
that
\[
\int_{\mathbb{R}^{2n}}\rho(z)dz=\sum_{j}\alpha_{j}\int_{\mathbb{R}^{2n}}%
W\psi_{j}(x,p)dpdx=1
\]
since the $\alpha_{j}$ sum up to one and the $\psi_{j}$ are unit vectors
in.$L^{2}(\mathbb{R}^{n})$.
\end{proof}

\begin{remark}
The assumption that $\{(\psi_{j},\alpha_{j})\}$ is a Feichtinger state is
crucial since it ensures us that the marginal properties (\ref{margbis1}) and
(\ref{margbis2}) hold.
\end{remark}

Modulation spaces also allow to define rigorously the covariance matrix
$\Sigma$ of a state. It is the symmetric $2n\times2\times n$ matrix defined
by
\begin{equation}
\Sigma=\int\nolimits_{\mathbb{R}^{2n}}(z-\overline{z})(z-\overline{z})^{T}%
\rho(z)dz \label{vectorform}%
\end{equation}
where $\overline{z}$, the expectation vector, is given by
\begin{equation}
\overline{z}=\int\nolimits_{\mathbb{R}^{2n}}z\rho(z)dz \label{average}%
\end{equation}
(all vectors $z$ are viewed as column matrices in these definitions). Assuming
$\overline{z}=0$ the covariance matrix explicitly given by
\[
\Sigma=%
\begin{pmatrix}
\Sigma_{XX} & \Sigma_{XP}\\
\Sigma_{PX} & \Sigma_{PP}%
\end{pmatrix}
\text{ \ , \ }\Sigma_{PX}=\Sigma_{XP}^{T}%
\]
where $\Sigma_{XP}=(\sigma_{x_{j}p_{k}}^{2})_{1\leq j,k,\leq n}$ with
$\sigma_{x_{j}p_{k}}^{2}$ given by (\ref{3}), and so on.

\begin{proposition}
\label{Prop2}Assume that $\{(\psi_{j},\alpha_{j})\}$ is a Feichtinger state
with $s\geq2$. (i) Then the covariance matrix $\Sigma$ is well-defined; (ii)
the Fourier transform $F\rho$ of the Wigner distribution of $\widehat{\rho}$
is twice continuously differentiable: $F\rho\in C^{2}(\mathbb{R}^{2n})$.
\end{proposition}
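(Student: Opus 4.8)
The plan is to reduce both parts to the single estimate $\langle z\rangle^{2}\rho\in L^{1}(\mathbb{R}^{2n})$, i.e. to the finiteness of all second moments of $|\rho|$. Indeed, once $\langle z\rangle^{2}\rho\in L^{1}$ then, since $|z_{j}|\leq\langle z\rangle$ and $|z_{j}z_{k}|\leq|z|^{2}\leq\langle z\rangle^{2}$, the integrals $\int_{\mathbb{R}^{2n}}\rho\,dz$, $\int_{\mathbb{R}^{2n}}z_{j}\rho\,dz$ and $\int_{\mathbb{R}^{2n}}z_{j}z_{k}\rho\,dz$ are all absolutely convergent; hence the expectation vector $\overline{z}$ in (\ref{average}) exists, and every entry of $\Sigma$ in (\ref{vectorform}) -- being a finite linear combination of these integrals -- is well defined, which is (i). (Being a convex combination of the real-valued functions $W\psi_{j}$, $\rho$ is real, so $\Sigma$ is a genuine real symmetric $2n\times2n$ matrix.) For (ii) I would invoke the classical criterion: if $\langle z\rangle^{m}f\in L^{1}(\mathbb{R}^{d})$ then $Ff\in C^{m}(\mathbb{R}^{d})$, the derivatives of order $\leq m$ being produced by differentiation under the integral sign and being continuous by dominated convergence against the fixed $L^{1}$ majorant $\langle z\rangle^{m}|f|$. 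Applying this with $f=\rho$ and $m=2$ gives $F\rho\in C^{2}(\mathbb{R}^{2n})$; in fact $\partial_{\zeta_{j}}\partial_{\zeta_{k}}F\rho$ is, up to a constant, the Fourier transform of $z_{j}z_{k}\rho$, so (i) and (ii) are two readings of the same bound (evaluating the second derivatives of $F\rho$ at the origin returns the entries of $\Sigma$).

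So it remains to establish $\langle z\rangle^{2}\rho\in L^{1}(\mathbb{R}^{2n})$, and this is where the hypothesis $s\geq2$ enters: each $\psi_{j}\in M_{s}^{1}(\mathbb{R}^{n})$, hence $W\psi_{j}\in L_{s}^{1}(\mathbb{R}^{2n})$ -- for $s=0$ this is the characterization (\ref{4}) already used in the proof of Proposition \ref{Prop1}, and for general $s$ it follows in the same way (for instance from the fact that $M_{s}^{1}$ is a Banach algebra on which the Wigner transform restricts to a bounded map into $L_{s}^{1}$; see \cite{Gro,Birkbis}). Because $\langle z\rangle\geq1$ and $s\geq2$ we have the pointwise inequality $\langle z\rangle^{2}\leq\langle z\rangle^{s}$, whence
\[
\int_{\mathbb{R}^{2n}}|W\psi_{j}(z)|\langle z\rangle^{2}\,dz\;\leq\;\int_{\mathbb{R}^{2n}}|W\psi_{j}(z)|\langle z\rangle^{s}\,dz\;=\;\Vert W\psi_{j}\Vert_{L_{s}^{1}}<\infty
\]
for every $j$. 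Summing over $j$ against the weights $\alpha_{j}$ gives $\Vert\langle z\rangle^{2}\rho\Vert_{L^{1}}\leq\sum_{j}\alpha_{j}\Vert W\psi_{j}\Vert_{L_{s}^{1}}$, and the proposition follows as soon as this series converges.

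I expect the convergence of $\sum_{j}\alpha_{j}\Vert W\psi_{j}\Vert_{L_{s}^{1}}$ to be the only genuinely delicate point. The definition of a Feichtinger state only requires $\psi_{j}\in M_{s}^{1}$ term by term, so the quantities $\Vert W\psi_{j}\Vert_{L_{s}^{1}}$ may a priori grow with $j$, and unlike the normalization $\sum_{j}\alpha_{j}=1$ used in Proposition \ref{Prop1} this weighted series is not finite on its own. It is of course finite for ensembles with finitely many components, and more generally under the mild uniformity condition $\sup_{j}\Vert W\psi_{j}\Vert_{L_{s}^{1}}<\infty$ (equivalently $\sup_{j}\Vert\psi_{j}\Vert_{\phi,M_{s}^{1}}<\infty$ for one, hence every, window $\phi\neq0$); I would either add this to the statement or fold it into the definition of a Feichtinger state. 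With that understood, $\langle z\rangle^{2}\rho\in L^{1}(\mathbb{R}^{2n})$ and the argument of the first two paragraphs completes the proof of (i) and (ii).
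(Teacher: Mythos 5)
Your argument is essentially the paper's own: reduce everything to $\langle z\rangle^{2}\rho\in L^{1}(\mathbb{R}^{2n})$ (the paper's condition (\ref{cond25})), deduce the absolute convergence of the first and second moments, and obtain (ii) by differentiating $F\rho$ twice under the integral sign with the fixed $L^{1}$ majorant. The one place where you go beyond the paper is also the right place to worry: the paper simply asserts (\ref{cond25}) from $s=2$, whereas, as you note, the termwise hypothesis $\psi_{j}\in M_{s}^{1}(\mathbb{R}^{n})$ together with $\sum_{j}\alpha_{j}=1$ does not by itself control $\sum_{j}\alpha_{j}\Vert W\psi_{j}\Vert_{L_{s}^{1}}$; for an infinite ensemble this can genuinely fail (take $\psi_{j}$ to be coherent states centred at points $z_{j}$ with $\alpha_{j}|z_{j}|^{2}$ not summable -- each $\psi_{j}\in\mathcal{S}(\mathbb{R}^{n})\subset M_{s}^{1}(\mathbb{R}^{n})$, yet the second moments of $\rho$ diverge since the $W\psi_{j}$ are positive). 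So your proof is complete for finite ensembles or under the uniform bound $\sup_{j}\Vert W\psi_{j}\Vert_{L_{s}^{1}}<\infty$ that you propose to add, and that proviso (or some equivalent summability hypothesis such as $\sum_{j}\alpha_{j}\Vert\psi_{j}\Vert_{\phi,M_{s}^{1}}<\infty$) is indeed needed to justify the step the paper takes for granted; with it, your use of the standard fact that $\psi\in M_{s}^{1}$ implies $W\psi\in L_{s}^{1}$ is fine and matches the paper's implicit use of the same fact.
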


\begin{proof}
It suffices to assume that $s=2$; we then have%
\begin{equation}
\int\nolimits_{\mathbb{R}^{2n}}|\rho(z)|(1+|z|^{2})dz<\infty~. \label{cond25}%
\end{equation}
Setting $z_{\alpha}=x_{\alpha}$ if $1\leq\alpha\leq n$ and $z_{\alpha
}=p_{\alpha}$ if $n+1\leq\alpha\leq2n$ we have $\langle z\rangle=(\langle
z_{1}\rangle,...,\langle z_{n}\rangle)$ where $\langle z_{\alpha}\rangle$ is
given by the absolutely convergent integral
\[
z_{\alpha}=\int\nolimits_{\mathbb{R}^{2n}}z_{\alpha}\rho(z)dz
\]
similarly, the integral
\[
\overline{z_{\alpha}z_{\beta}}=\int\nolimits_{\mathbb{R}^{2n}}z_{\alpha
}z_{\beta}\rho(z)dz
\]
is also absolutely convergent in view of the trivial inequalities $|z_{\alpha
}z_{\beta}|\leq1+|z|^{2}$. We have%
\[
F\rho(z)=\left(  \tfrac{1}{2\pi\hbar}\right)  ^{n}\int_{\mathbb{R}^{2n}%
}e^{-\frac{i}{\hbar}z\cdot z^{\prime}}\rho(z^{\prime})dz^{\prime}~;
\]
differentiating twice under the integration sign we get
\[
\partial_{z_{\alpha}}F\rho=-\tfrac{i}{\hbar}F(z_{\alpha}\rho)\text{ \ },\text{
\ }\partial_{z_{\alpha}}\partial_{z_{\beta}}F\rho=\left(  -\tfrac{i}{\hbar
}\right)  ^{2}F(z_{\alpha}z_{\beta}\rho)
\]
hence the estimates%
\begin{align*}
|\partial_{z_{\alpha}}F\rho(z)|  &  \leq\tfrac{1}{\hbar}\left\vert
\int_{\mathbb{R}^{2n}}z_{\alpha}\rho(z)dz\right\vert <\infty\\
|\partial_{z_{\alpha}}\partial_{z_{\beta}}F\rho(z)|  &  \leq\left(  \tfrac
{1}{\hbar}\right)  ^{2}\left\vert \int_{\mathbb{R}^{2n}}z_{\alpha}z_{\beta
}\rho(z)dz\right\vert <\infty~.
\end{align*}

\end{proof}

\section{Independence of the Statistical Ensemble}

Several distinct statistical ensembles can give rise to the same density
matrix. For instance, given an arbitrary mixed state $\{(\psi_{j},\alpha
_{j})\}$ as above, the density matrix $\widehat{\rho}=\sum_{j}\alpha
_{j}\widehat{\rho}_{j}$ where the $\widehat{\rho}_{j}$ are the orthogonal
projection on the rays $\mathbb{C}\psi_{j}$ can be written, using the spectral
decomposition theorem, as $\widehat{\rho}=\sum_{j}\lambda_{j}\widehat{{\rho
_{j}^{\prime}}}$ where the $\lambda_{j}$ are the eigenvalues of $\widehat{\rho
}$ and the $\widehat{{\rho_{j}^{\prime}}}$ is the orthogonal projections on
the rays $\mathbb{C}\phi_{j}$ the $\phi_{j}$ being the eigenvectors
corresponding to the $\lambda_{j}$.

The main result of this section is a generalization to the
infinite-dimensional case of a result due to Jaynes \cite{Jaynes}. Recall that
a partial isometry is an operator whose restriction to the orthogonal
complement of its null-space is an isometry \cite{Halmos}.

\begin{proposition}
\label{Prop3}Let $\{(\psi_{j},\lambda_{j})\}$ be a mixed state and
\ $(\phi_{j})$ and orthonormal basis of $L^{2}(\mathbb{R}^{n})$ and write
$\psi_{j}=\sum_{k}a_{jk}\phi_{k}$. (i) The operator $\widehat{A}$ defined by
\[
\widehat{A}\phi_{j}=\sum_{k}\lambda_{j}^{1/2}a_{jk}\phi_{k}%
\]
is a Hilbert--Schmidt operator and $\widehat{\rho}=\widehat{A}\widehat{A}%
^{\ast}$ is the density matrix of the state $\{(\psi_{j},\lambda_{j})\}$. (ii)
Two mixed states $\{(\psi_{j},\lambda_{j})\}$ and $\{(\psi_{j}^{\prime
},\lambda_{j}^{\prime})\}$ generate the same density matrix $\widehat{\rho
}=\widehat{A}\widehat{A}^{\ast}$ if and only if there exists a partial
isometry $\widehat{U}$ of $L^{2}(\mathbb{R}^{n})$ such that $\widehat{A}%
=\widehat{A^{\prime}}\widehat{U}$ where $\widehat{A^{\prime}}$ is defined in
terms of $\{(\psi_{j}^{\prime},\lambda_{j}^{\prime})\}$.
\end{proposition}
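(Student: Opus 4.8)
The plan is to prove (i) by a direct computation and (ii) by an infinite-dimensional polar-decomposition argument, whose substance is the construction of the partial isometry. For (i), since $\psi_{j}=\sum_{k}a_{jk}\phi_{k}$ the prescription reads $\widehat{A}\phi_{j}=\lambda_{j}^{1/2}\psi_{j}$; expanding $v\in L^{2}(\mathbb{R}^{n})$ in the orthonormal basis $(\phi_{j})$, this extends to a bounded operator given by the series $\widehat{A}v=\sum_{j}(v|\phi_{j})\lambda_{j}^{1/2}\psi_{j}$, which converges absolutely in $L^{2}$ because $\sum_{j}\lambda_{j}^{1/2}|(v|\phi_{j})|\leq||v||\,(\sum_{j}\lambda_{j})^{1/2}=||v||$ by Cauchy--Schwarz; moreover $\sum_{j}||\widehat{A}\phi_{j}||^{2}=\sum_{j}\lambda_{j}||\psi_{j}||^{2}=\sum_{j}\lambda_{j}=1$, so $\widehat{A}$ is Hilbert--Schmidt with $||\widehat{A}||_{\mathrm{HS}}=1$. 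Computing the adjoint gives $\widehat{A}^{\ast}\psi=\sum_{j}\lambda_{j}^{1/2}(\psi|\psi_{j})\phi_{j}$, whence
\[
\widehat{A}\widehat{A}^{\ast}\psi=\sum_{j}\lambda_{j}^{1/2}(\psi|\psi_{j})\,\widehat{A}\phi_{j}=\sum_{j}\lambda_{j}(\psi|\psi_{j})\psi_{j}=\sum_{j}\lambda_{j}\widehat{\rho}_{j}\psi ,
\]
which is precisely the density matrix $\widehat{\rho}=\sum_{j}\lambda_{j}\widehat{\rho}_{j}$ of the state; this proves (i). I note for later use that $\operatorname{tr}(\widehat{A}\widehat{A}^{\ast})=||\widehat{A}||_{\mathrm{HS}}^{2}=\sum_{j}\lambda_{j}=1$.

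For the ``only if'' part of (ii), assume $\widehat{A}\widehat{A}^{\ast}=\widehat{A^{\prime}}(\widehat{A^{\prime}})^{\ast}$. Then for every $\psi\in L^{2}(\mathbb{R}^{n})$,
\[
||\widehat{A}^{\ast}\psi||^{2}=(\widehat{A}\widehat{A}^{\ast}\psi|\psi)=(\widehat{A^{\prime}}(\widehat{A^{\prime}})^{\ast}\psi|\psi)=||(\widehat{A^{\prime}})^{\ast}\psi||^{2} ,
\]
so in particular $\ker\widehat{A}^{\ast}=\ker(\widehat{A^{\prime}})^{\ast}$. Hence the assignment $(\widehat{A^{\prime}})^{\ast}\psi\longmapsto\widehat{A}^{\ast}\psi$ is a well-defined linear isometry from $\operatorname{ran}(\widehat{A^{\prime}})^{\ast}$ onto $\operatorname{ran}\widehat{A}^{\ast}$; it extends by continuity to an isometry of $\overline{\operatorname{ran}(\widehat{A^{\prime}})^{\ast}}$ onto $\overline{\operatorname{ran}\widehat{A}^{\ast}}$, and then --- by declaring it equal to $0$ on the orthogonal complement $(\operatorname{ran}(\widehat{A^{\prime}})^{\ast})^{\perp}=\ker\widehat{A^{\prime}}$ --- to a partial isometry $\widehat{V}$ of $L^{2}(\mathbb{R}^{n})$ satisfying $\widehat{V}(\widehat{A^{\prime}})^{\ast}=\widehat{A}^{\ast}$. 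Taking adjoints gives $\widehat{A^{\prime}}\widehat{V}^{\ast}=\widehat{A}$, and $\widehat{U}:=\widehat{V}^{\ast}$, being the adjoint of a partial isometry, is itself a partial isometry; this is the operator required.

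For the ``if'' part, suppose $\widehat{A}=\widehat{A^{\prime}}\widehat{U}$ with $\widehat{U}$ a partial isometry, so that $\widehat{U}\widehat{U}^{\ast}$ is an orthogonal projection and $\widehat{U}\widehat{U}^{\ast}\leq I$; then, as positive operators, $\widehat{A}\widehat{A}^{\ast}=\widehat{A^{\prime}}\widehat{U}\widehat{U}^{\ast}(\widehat{A^{\prime}})^{\ast}\leq\widehat{A^{\prime}}(\widehat{A^{\prime}})^{\ast}$, and since by part (i) applied to each ensemble $\operatorname{tr}(\widehat{A}\widehat{A}^{\ast})=1=\operatorname{tr}(\widehat{A^{\prime}}(\widehat{A^{\prime}})^{\ast})$, the positive trace-class operator $\widehat{A^{\prime}}(\widehat{A^{\prime}})^{\ast}-\widehat{A}\widehat{A}^{\ast}$ has zero trace and must vanish, giving $\widehat{A}\widehat{A}^{\ast}=\widehat{A^{\prime}}(\widehat{A^{\prime}})^{\ast}$. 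The step I expect to be the real obstacle is therefore the infinite-dimensional bookkeeping in the ``only if'' direction: one must check that the map is well defined and isometric on the generally non-closed subspace $\operatorname{ran}(\widehat{A^{\prime}})^{\ast}$ before passing to closures, and that it is completed to a genuine partial isometry on all of $L^{2}(\mathbb{R}^{n})$ by adjoining the zero map on the orthogonal complement of its initial space --- the equality of null-spaces $\ker\widehat{A}^{\ast}=\ker(\widehat{A^{\prime}})^{\ast}$ furnished by the norm identity is exactly what makes this consistent. In the finite-dimensional setting treated by Jaynes these points are automatic and $\widehat{U}$ may be taken unitary when the two ensembles have the same cardinality; the relevance of the Hilbert--Schmidt class here is precisely that it lets this polar-decomposition argument run verbatim in infinite dimensions.
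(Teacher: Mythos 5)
Your proof is correct, and it differs from the paper's in how part (ii) is handled. Part (i) is essentially the paper's computation (the paper checks $\widehat{A}\widehat{A}^{\ast}\phi_{j}=\widehat{\rho}\phi_{j}$ on the basis vectors, you compute $\widehat{A}^{\ast}$ and $\widehat{A}\widehat{A}^{\ast}$ directly; same substance). For (ii), the paper disposes of the ``only if'' direction by citing Douglas' lemma as a black box, whereas you construct the partial isometry by hand: the norm identity $\Vert\widehat{A}^{\ast}\psi\Vert=\Vert(\widehat{A^{\prime}})^{\ast}\psi\Vert$, the well-defined isometry $(\widehat{A^{\prime}})^{\ast}\psi\mapsto\widehat{A}^{\ast}\psi$ on $\operatorname{ran}(\widehat{A^{\prime}})^{\ast}$, extension to the closure, zero on $\ker\widehat{A^{\prime}}$, and passage to adjoints --- this is exactly the proof of the equality case of Douglas' lemma, so your argument is self-contained where the paper delegates. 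More notably, in the ``if'' direction the paper simply asserts that $\widehat{A}=\widehat{A^{\prime}}\widehat{U}$ with $\widehat{U}$ a partial isometry forces $\widehat{A}\widehat{A}^{\ast}=\widehat{A^{\prime}}(\widehat{A^{\prime}})^{\ast}$; this is not automatic, since $\widehat{U}\widehat{U}^{\ast}$ is only a projection and need not act as the identity on $\overline{\operatorname{ran}(\widehat{A^{\prime}})^{\ast}}$. Your trace argument --- $\widehat{A}\widehat{A}^{\ast}\leq\widehat{A^{\prime}}(\widehat{A^{\prime}})^{\ast}$ together with $\operatorname{tr}(\widehat{A}\widehat{A}^{\ast})=\operatorname{tr}(\widehat{A^{\prime}}(\widehat{A^{\prime}})^{\ast})=1$ from part (i), so the positive difference has zero trace and vanishes --- supplies the missing justification, at the (harmless) price of using that both ensembles are normalized. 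So your route buys a self-contained and in one place more careful proof; the paper's buys brevity by outsourcing the key step to \cite{Douglas}.
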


\begin{proof}
\textit{(i)} We have
\[
(\widehat{A}\phi_{j}|\widehat{A}\phi_{j})=\lambda_{j}\sum_{k,\ell}%
a_{jk}\overline{a_{j\ell}}(\phi_{k}|\phi_{\ell})=\lambda_{j}\sum_{k}%
|a_{jk}|^{2}=\lambda_{j}%
\]
since $||\psi_{j}||^{2}=\sum_{k}|a_{jk}|^{2}=1$. It follows that
\begin{equation}
\sum_{j}(\widehat{A}\phi_{j}|\widehat{A}\phi_{j})=\sum_{j}\lambda_{j}=1
\label{agi1}%
\end{equation}
hence $\widehat{A}$ is a Hilbert--Schmidt operator. It follows that
$\widehat{A}^{\ast}$ is also Hilbert--Schmidt, hence $\widehat{A}%
\widehat{A}^{\ast}$ is a trace class operator with unit trace:
\[
\operatorname*{Tr}(\widehat{A}\widehat{A}^{\ast})=\operatorname*{Tr}%
(\widehat{A}^{\ast}\widehat{A})=1
\]
the second equality in view of (\ref{agi1}). Let us show that in fact
$\widehat{\rho}=\widehat{A}\widehat{A}^{\ast}$, that is
\[
\widehat{\rho}\psi=\sum_{k}\lambda_{k}(\psi|\psi_{k})\psi_{k}%
\]
for every $\psi\in L^{2}(\mathbb{R}^{n})$. It is sufficient to show that this
identity holds for the basis vectors $\phi_{j}$, that is%
\begin{equation}
\widehat{\rho}\phi_{j}=\sum_{k}\lambda_{k}(\phi_{j}|\psi_{k})\psi_{k}
\label{A}%
\end{equation}
for every $j$. Using the expansions
\[
\psi_{k}=\sum_{m}a_{km}\phi_{m}=\sum_{\ell}a_{k\ell}\phi_{\ell}%
\]
we can rewrite this identity as%
\[
\widehat{\rho}\phi_{j}=\sum_{k,\ell}\lambda_{k}\overline{a_{kj}}a_{k\ell}%
\phi_{\ell}~.
\]
On the other hand, by definition of $\widehat{A}$ we have
\[
\widehat{A}^{\ast}\phi_{j}=\sum_{k}\lambda_{k}^{1/2}\overline{a_{kj}}\phi_{j}%
\]
hence, by linearity,%
\begin{equation}
\widehat{A}\widehat{A}^{\ast}\phi_{j}=\sum_{k}\lambda_{k}^{1/2}\overline
{a_{kj}}\widehat{A}\phi_{j}=\sum_{k,\ell}\lambda_{k}\overline{a_{kj}}a_{k\ell
}\phi_{\ell} \label{B}%
\end{equation}
which is the same thing as $\widehat{\rho}\phi_{j}$. \textit{(ii) }Let
$\widehat{U}$ be a partial isometry of $L^{2}(\mathbb{R}^{n})$; then
$\widehat{A^{\prime}}\widehat{A^{\prime}}^{\ast}=$ $\widehat{A}\widehat{A}%
^{\ast}=\widehat{\rho}$. Suppose conversely that $\widehat{A^{\prime}%
}\widehat{A^{\prime}}^{\ast}=$ $\widehat{A}\widehat{A}^{\ast}$. A classical
result from the theory of Hilbert spaces (Douglas' lemma \cite{Douglas}) tells
us that there exists a partial isometry $\widehat{U}$ such that
$\widehat{A^{\prime}}=\widehat{A}\widehat{U}$ so we are done..
\end{proof}

\begin{remark}
The result above has been considered and proven by \cite{Bergou,Nielsen,NC} in
the case of quantum states having a finite numbers of elements. Their proofs
do not immediately extend to the infinite dimensional case.
\end{remark}

An immediate consequence is that the property of being a Feichtinger state is
invariant under transformations preserving the density matrix.

\begin{corollary}
\label{Prop4}Let $\{(\psi_{j},\alpha_{j}):j\in J\}$be a Feichtinger state
where $J$ is a finite set of indices. Then every state $\{(\phi_{j},\beta
_{j})$ generating the same density matrix $\widehat{\rho}$ is also a
Feichtinger state. In particular the spectral decomposition $\widehat{\rho
}=\sum_{j\in J}\lambda_{j}\widehat{\rho}_{j}$ consists of orthogonal
projections $\widehat{\rho}_{j}$ on rays $\mathbb{C}\psi_{j}$ where $\psi
_{j}\in M_{s}^{1}(\mathbb{R}^{n})$.
\end{corollary}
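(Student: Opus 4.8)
The plan is to reduce the corollary to a single structural fact: since $J$ is finite, the density operator $\widehat{\rho}=\sum_{j\in J}\alpha_{j}\widehat{\rho}_{j}$ has finite rank, so its range $\mathrm{Ran}\,\widehat{\rho}$ is a finite-dimensional — hence closed — subspace of $L^{2}(\mathbb{R}^{n})$, and the first thing I would check is that it is contained in $M_{s}^{1}(\mathbb{R}^{n})$. This is immediate: from $\widehat{\rho}\psi=\sum_{j\in J}\alpha_{j}(\psi|\psi_{j})\psi_{j}$ one sees that $\mathrm{Ran}\,\widehat{\rho}$ lies in the linear span of the finitely many vectors $\psi_{j}$ with $\alpha_{j}>0$; since each of these is in $M_{s}^{1}(\mathbb{R}^{n})$ and $M_{s}^{1}(\mathbb{R}^{n})$ is a vector space, any finite linear combination of them again belongs to $M_{s}^{1}(\mathbb{R}^{n})$, whence $\mathrm{Ran}\,\widehat{\rho}\subset M_{s}^{1}(\mathbb{R}^{n})$.

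The second step is to observe that every vector occurring with positive weight in \emph{any} ensemble of $\widehat{\rho}$ must lie in $\mathrm{Ran}\,\widehat{\rho}$. Let $\{(\phi_{j},\beta_{j})\}$ be a state generating $\widehat{\rho}$, so that $\widehat{\rho}\psi=\sum_{j}\beta_{j}(\psi|\phi_{j})\phi_{j}$ for all $\psi$ (discarding the indices with $\beta_{j}=0$). If $\psi\in\ker\widehat{\rho}$ then $0=(\widehat{\rho}\psi|\psi)=\sum_{j}\beta_{j}|(\psi|\phi_{j})|^{2}$, so $(\psi|\phi_{j})=0$ for every $j$; hence each $\phi_{j}$ belongs to $(\ker\widehat{\rho})^{\perp}=\overline{\mathrm{Ran}\,\widehat{\rho}}=\mathrm{Ran}\,\widehat{\rho}$, the final equality holding because $\widehat{\rho}$ has finite rank. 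Combined with the first step this yields $\phi_{j}\in M_{s}^{1}(\mathbb{R}^{n})$ for every $j$ (and for the same value of $s$), i.e. $\{(\phi_{j},\beta_{j})\}$ is a Feichtinger state. The ``in particular'' claim is simply the special case in which the $\phi_{j}$ are the eigenvectors of $\widehat{\rho}$: each eigenvector with eigenvalue $\lambda_{j}>0$ satisfies $\psi_{j}=\lambda_{j}^{-1}\widehat{\rho}\psi_{j}\in\mathrm{Ran}\,\widehat{\rho}\subset M_{s}^{1}(\mathbb{R}^{n})$.

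Alternatively, one could route the second step through Proposition \ref{Prop3}, writing $\widehat{\rho}=\widehat{A}\widehat{A}^{\ast}=\widehat{A^{\prime}}\widehat{A^{\prime}}^{\ast}$, noting $\mathrm{Ran}\,\widehat{A^{\prime}}=\mathrm{Ran}\,\widehat{\rho}$ (again by finite rank) and that each $\phi_{j}$ is a scalar multiple of $\widehat{A^{\prime}}$ applied to an orthonormal basis vector, hence lies in $\mathrm{Ran}\,\widehat{A^{\prime}}$; one then concludes as before. Either way, the only place where finiteness of $J$ is genuinely used — and the single point that needs care — is the identification of $\overline{\mathrm{Ran}\,\widehat{\rho}}$ with $\mathrm{Ran}\,\widehat{\rho}$ together with the fact that a \emph{finite} span of $M_{s}^{1}$-vectors stays inside $M_{s}^{1}(\mathbb{R}^{n})$: for an infinite index set the closed range of $\widehat{\rho}$ can contain vectors outside $M_{s}^{1}(\mathbb{R}^{n})$, so the hypothesis that $J$ is finite cannot be dropped.
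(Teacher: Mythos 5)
Your argument is correct, but it follows a different route from the paper's. The paper proves the corollary by invoking Proposition \ref{Prop3}: the relation $\widehat{A}=\widehat{A'}\widehat{U}$ with $\widehat{U}$ a partial isometry is used to produce finite linear relations $\phi_{k}=\sum_{j}a_{jk}\psi_{j}$, after which membership of $\phi_k$ in $M_{s}^{1}(\mathbb{R}^{n})$ follows from the vector space structure, exactly as in your last step. You instead bypass the Jaynes/Douglas machinery entirely and argue structurally: $\mathrm{Ran}\,\widehat{\rho}$ lies in the finite span of the $\psi_j$ (hence in $M_{s}^{1}$), and for any ensemble $\{(\phi_j,\beta_j)\}$ the computation $0=(\widehat{\rho}\psi|\psi)=\sum_j\beta_j|(\psi|\phi_j)|^2$ for $\psi\in\ker\widehat{\rho}$ places every $\phi_j$ with $\beta_j>0$ in $(\ker\widehat{\rho})^{\perp}=\overline{\mathrm{Ran}\,\widehat{\rho}}=\mathrm{Ran}\,\widehat{\rho}$, the last equality by finite rank. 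This kernel--range argument is more elementary and self-contained, and it transparently covers the case where the second ensemble is infinite (the paper's phrase \textquotedblleft finite linear relations\textquotedblright\ requires some unwinding of Proposition \ref{Prop3} to justify in that generality), while the paper's route has the advantage of exhibiting the explicit coefficients $a_{jk}$ coming from the partial isometry and of tying the corollary directly to the non-uniqueness result that is the section's theme; your closing remark on why finiteness of $J$ cannot be dropped matches the paper's own subsequent remark that the infinite case is open.
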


\begin{proof}
Assume that $\psi_{j}\in M_{s}^{1}(\mathbb{R}^{n})$ for every $j$. In view of
Proposition \ref{Prop3} there exist finite linear relations $\phi_{k}=\sum
_{j}a_{jk}\psi_{j}$ for each index $k$ hence $\phi_{k}\in M_{s}^{1}%
(\mathbb{R}^{n})$ for every $k$ since $M_{s}^{1}(\mathbb{R}^{n})$ is a vector space.
\end{proof}

\begin{remark}
The proof does not trivially extend to the general case where the index set
$J$ is infinite because of convergence problems. The question whether the
Corollary extends to the general case is open.
\end{remark}

As a bonus we obtain the following new result about convex sums of Wigner distributions.

\begin{corollary}
\label{Prop5}Let $(\psi_{j})_{j}$ be a sequence in $M_{s}^{1}(\mathbb{R}^{n}%
)$. Let $(\phi_{j})_{j}$ be a sequence of functions in $L^{2}(\mathbb{R}^{n})$
and sequences $(\alpha_{j})$ and $(\beta_{j})$ of positive numbers such that
$\sum_{j}\alpha_{j}=\sum_{j}\beta_{j}=1$. We assume that $||\psi_{j}%
||=||\phi_{j}||=1$ for all $j$. If we have%
\[
\sum_{j}\alpha_{j}W\psi_{j}=\sum_{j}\beta_{j}W\phi_{j}%
\]
then $\phi_{k}\in M_{s}^{1}(\mathbb{R}^{n})$ for every $k$.
\end{corollary}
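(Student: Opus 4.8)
The plan is to pass to the operator level and then lean on Proposition~\ref{Prop3}. Since every cross-Wigner function is continuous and obeys the bound~(\ref{bound}), the two series $\sum_j\alpha_jW\psi_j$ and $\sum_j\beta_jW\phi_j$ converge absolutely and pointwise to one and the same bounded function, which we call $\rho$. Because $(2\pi\hbar)^n\rho$ is the Weyl symbol of the associated operator and the Weyl symbol determines the operator, the hypothesis $\sum_j\alpha_jW\psi_j=\sum_j\beta_jW\phi_j$ amounts to the equality of density operators
\[
\widehat{\rho}=\sum_j\alpha_j\widehat{\rho}_{\psi_j}=\sum_j\beta_j\widehat{\rho}_{\phi_j},
\]
each sum converging in trace norm because $\sum_j\alpha_j=\sum_j\beta_j=1$ and every $\widehat{\rho}_{\psi_j}$, $\widehat{\rho}_{\phi_j}$ is a rank-one projection of unit trace norm. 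Thus $\{(\psi_j,\alpha_j)\}$ and $\{(\phi_j,\beta_j)\}$ are two statistical ensembles generating the same $\widehat{\rho}$.

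Next I would fix an orthonormal basis $(e_m)$ of $L^2(\mathbb{R}^n)$ and form, exactly as in Proposition~\ref{Prop3}, the Hilbert--Schmidt operators $\widehat{A}$ attached to $\{(\psi_j,\alpha_j)\}$ and $\widehat{A}'$ attached to $\{(\phi_j,\beta_j)\}$, so that $\widehat{A}\widehat{A}^{\ast}=\widehat{\rho}=\widehat{A}'(\widehat{A}')^{\ast}$. Proposition~\ref{Prop3}(ii), that is Douglas' lemma \cite{Douglas}, then provides a partial isometry $\widehat{U}$ of $L^2(\mathbb{R}^n)$ with $\widehat{A}'=\widehat{A}\widehat{U}$. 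From the very definition of these operators one reads off $\widehat{A}e_j=\alpha_j^{1/2}\psi_j$ and $\widehat{A}'e_k=\beta_k^{1/2}\phi_k$. Writing $\widehat{U}e_k=\sum_j c_je_j$ with $\sum_j|c_j|^2=\|\widehat{U}e_k\|^2\le 1$ and using that $\widehat{A}$ is bounded, we obtain
\[
\phi_k=\beta_k^{-1/2}\widehat{A}\widehat{U}e_k=\sum_j d_j\psi_j,\qquad d_j=\beta_k^{-1/2}c_j\alpha_j^{1/2},
\]
the series converging in $L^2(\mathbb{R}^n)$. Since $(c_j)\in\ell^2$ and $(\alpha_j^{1/2})\in\ell^2$ (because $\sum_j\alpha_j=1$), Cauchy--Schwarz gives $\sum_j|d_j|\le\beta_k^{-1/2}<\infty$: so $\phi_k$ is a combination with \emph{absolutely summable} coefficients of the functions $\psi_j\in M_s^1(\mathbb{R}^n)$.

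To finish I would upgrade this from $L^2$-convergence to convergence in $M_s^1(\mathbb{R}^n)$. Fixing one of the equivalent norms $\|\cdot\|_{M_s^1}$, the partial sums $S_N=\sum_{j\le N}d_j\psi_j$ lie in the vector space $M_s^1$ and satisfy $\|S_N-S_M\|_{M_s^1}\le\sum_{M<j\le N}|d_j|\,\|\psi_j\|_{M_s^1}$; hence, provided the scalar series $\sum_j|d_j|\,\|\psi_j\|_{M_s^1}$ converges, $(S_N)$ is Cauchy in the Banach space $M_s^1$ and converges there, necessarily to $\phi_k$ since the embedding $M_s^1(\mathbb{R}^n)\hookrightarrow L^2(\mathbb{R}^n)$ is continuous (see \cite{Gro}). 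That would give $\phi_k\in M_s^1(\mathbb{R}^n)$.

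The main obstacle is exactly this last summability, $\sum_j|d_j|\,\|\psi_j\|_{M_s^1}<\infty$, equivalently (again by Cauchy--Schwarz) $\sum_j\alpha_j\|\psi_j\|_{M_s^1}^2<\infty$: the absolute summability of the $d_j$ does not by itself control the norms $\|\psi_j\|_{M_s^1}$, which may in principle grow, so this is the same convergence difficulty recorded in the remark following Corollary~\ref{Prop4}. Under the supplementary assumption $\sup_j\|\psi_j\|_{M_s^1}<\infty$ it is immediate. In general one must extract the needed bound from the structure of $\widehat{\rho}$ itself --- for instance from the operator inequality $\widehat{\rho}\ge\beta_k\widehat{\rho}_{\phi_k}$, which, tested on the time--frequency translates $\pi(z)\phi_0$ of a fixed Gaussian $\phi_0$, yields the pointwise estimate $\beta_k|W(\phi_k,\phi_0)(z)|^2\le\sum_j\alpha_j|W(\psi_j,\phi_0)(z)|^2$, after which one would integrate $\langle z\rangle^{s}$ against the square root of the right-hand side. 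Supplying (or hypothesising) that control is where the substantive work lies.
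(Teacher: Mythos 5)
Your reduction to the operator level is sound and is essentially the paper's route: equality of the two convex sums of Wigner transforms gives equality of Weyl symbols, hence of density operators, and Proposition~\ref{Prop3} (Douglas' lemma) yields the partial isometry relating the two ensembles, from which $\phi_k=\sum_j d_j\psi_j$ in $L^2(\mathbb{R}^n)$ with $\sum_j|d_j|<\infty$. But, as you say yourself, the argument does not close: to conclude $\phi_k\in M_s^1(\mathbb{R}^n)$ you need the expansion to converge in the $M_s^1$-norm, i.e. $\sum_j|d_j|\,\|\psi_j\|_{M_s^1}<\infty$, and nothing in the hypotheses controls the norms $\|\psi_j\|_{M_s^1}$ (the normalization is in $L^2$, not in $M_s^1$). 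The alternative route you sketch via $\widehat{\rho}\geq\beta_k\widehat{\rho}_{\phi_k}$ hits the same wall: bounding the square root of $\sum_j\alpha_j|W(\psi_j,\phi_0)|^2$ by $\sum_j\alpha_j^{1/2}|W(\psi_j,\phi_0)|$ and integrating against $\langle z\rangle^{s}$ leaves you with $\sum_j\alpha_j^{1/2}\|\psi_j\|_{M_s^1}$, which may again diverge. So your proposal establishes the Corollary only under an extra hypothesis (for instance $\sup_j\|\psi_j\|_{M_s^1}<\infty$, or finitely many nonzero $\alpha_j,\beta_j$), not in the generality in which it is stated.

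For comparison, the paper's own proof is shorter but no stronger: it notes the absolute convergence of both series, identifies $(2\pi\hbar)^n\rho$ with the Weyl symbol of the density operator of the Feichtinger state $\{(\psi_j,\alpha_j)\}$, and then simply invokes Corollary~\ref{Prop4}. That Corollary, however, is stated and proved only for a finite index set $J$ --- its proof uses exactly the finite linear relations you tried to generalize, together with the vector space property of $M_s^1(\mathbb{R}^n)$ --- and the Remark following it concedes that the infinite case is open, for precisely the convergence reason you identified. Hence in the finite-ensemble case your argument and the paper's coincide and are complete, while in the genuinely infinite case your write-up makes explicit a gap that the paper's appeal to Corollary~\ref{Prop4} leaves unaddressed; your honest flagging of the missing summability is accurate, but it means the proof as proposed does not prove the statement as written.
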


\begin{proof}
Both series are absolutely convergent in view of (\ref{bound}); for instance
\[
\sum_{j}\alpha_{j}|W\psi_{j}|\leq\left(  \tfrac{2}{\pi\hbar}\right)  ^{n}%
\sum_{j}\alpha_{j}=\left(  \tfrac{2}{\pi\hbar}\right)  ^{n}~.
\]
The function%
\[
\rho=(2\pi\hbar)^{n}\sum_{j}a_{j}W\psi_{j}=(2\pi\hbar)^{n}\sum_{j}\beta
_{j}W\phi_{j}%
\]
is the Wigner distribution of a density matrix generated by the Feichtinger
state $\{(\psi_{j},\alpha_{j})\}$. In view of Corollary \ref{Prop4} we must
then have $\phi_{j}\in M_{s}^{1}(\mathbb{R}^{n})$ for every $j$.
\end{proof}

\section{Discussion}

We have seen that the class of modulation spaces $M_{s}^{1}(\mathbb{R}^{n})$
provide us with a very convenient framework for the study of the Wigner
distribution of density matrices; it is far less restrictive than the
conventional use of the Schwartz space $\mathcal{S}(\mathbb{R}^{n})$ which
requires that the functions and all their derivatives be zero at infinity. In
addition, the topology of modulation spaces is simpler since they are defined
by a norm making them to Banach spaces, while that of $\mathcal{S}%
(\mathbb{R}^{n})$ is defined by a family of semi-norms making it to a
Fr\'{e}chet space. Another particularly attractive feature of modulation
spaces is that they allow to introduce an useful class of Banach Gelfand
triples (see \cite{cofelu08}). For instance,
\[
(S_{0}(\mathbb{R}^{n}),L^{2}(\mathbb{R}^{n}),S_{0}^{\prime}(\mathbb{R}^{n}))
\]
where $S_{0}^{\prime}(\mathbb{R}^{n})$ is the dual space of the Feichtinger
algebra $S_{0}(\mathbb{R}^{n})$ is such a triple. $S_{0}^{\prime}%
(\mathbb{R}^{n})$ consists of all $\psi\in S^{\prime}(\mathbb{R}^{n})$ such
that $W(\psi,\phi)\in L^{\infty}(\mathbb{R}^{2n})$ for one (and hence all)
$\phi\in S_{0}(\mathbb{R}^{n})$; the duality bracket is simply given by the
pairing
\begin{equation}
(\psi,\psi^{\prime})=\int_{\mathbb{R}^{2n}}W(\psi,\phi)(z)\overline
{W(\psi^{\prime},\phi)(z)}dz~. \label{dual151}%
\end{equation}
Since $S_{0}(\mathbb{R}^{n})$ is the smallest Banach space isometrically
invariant under the action of the metaplectic group its dual is essentially
the largest space of distributions with this property. The use of such triples
makes the use of the Dirac bra-ket notation much more natural and rigorous.
For instance, objects like $\langle\psi|\phi\rangle$ automatically have a
meaning for all $\phi\in S_{0}(\mathbb{R}^{n})$ and all $\psi\in S_{0}%
^{\prime}(\mathbb{R}^{n})$.

\begin{acknowledgement}
Maurice de Gosson has been supported by the grant P 33447\ of the Austrian
Research Foundation FWF. The authors thank Hans Feichtinger for illuminating
discussions about modulation spaces, and for having suggested various
improvements of the original text.
\end{acknowledgement}

\textbf{Data Availability. }All the data used for the present paper (here:
LaTex file) is freely available. No other data has been used or created.

\end{document}